\newtheorem*{prop*}{Proposition}
\title{Replicating Portfolios: Constructing Permissionless Derivatives}
\author{ 
Estelle Sterrett\\
{\small \texttt{estelle@primitive.xyz}}
\and
Waylon Jepsen\\
{\small \texttt{jepsen@primitive.xyz}}
\and
Evan Kim\\
{\small \texttt{evan@primitive.xyz}}
}
\date{May 2022}
\begin{document}

\maketitle
\begin{abstract}
    The current design space of derivatives in Decentralized Finance (DeFi) relies heavily on oracle systems. Replicating market makers (RMMs) provide a mechanism for converting specific payoff functions to an associated Constant Function Market Makers (CFMMs). We leverage RMMs to replicate the approximate payoff of a Black-Scholes covered call option. RMM-01 allows access to Black-Scholes pricing on-chain without oracles and with the robustness of CFMMs. We provide frameworks for derivative instruments and structured products achievable on-chain structured around RMM-01. We construct long and binary options and briefly discuss perpetual covered call strategies commonly referred to as "theta vaults". Moreover, we introduce a procedure to eliminate liquidation risk in lending markets. The results suggest that CFMMs are essential for structured product design with minimized trust dependencies.
\end{abstract}
\section*{Introduction}

\paragraph{Order Books.}
Traditional exchanges used for trading stocks, commodities, and even digital assets follow a continuous-limit order book design. This market design consists of a list of open buy and sells orders. Buyers and sellers place order limits that specify the maximum or minimum price at which they are willing to buy or sell an asset. A centralized counterparty then matches the buy and sell orders automatically and profits on \emph{making the market}. Contrasting the centralized exchange, decentralized exchanges (DEXs) manage trades on-chain with a smart contract. This design provides a transparent and censorship-resistant marketplace \cite{daian2019flash, uniswap}.
\paragraph{CFMMs.}
Constant Function Market Makers (CFMMs) dominate the design space of DEXs~\cite{uniswap,angeris2020improved,angeris2022optimal}. CFMMs account for tens of billions of liquidity and trillions in trading volume~\cite{primitive}. In a CFMM, users provide liquidity to a vector of asset reserves $R\in\mathbb{R}^n_{+}$ where $n \geq 2$ in exchange for a liquidity provider token (LPT) representing their share of the total reserves. The LPTs serve as an immutable claim to the proportion of the liquidity pool a user owns. A trading function $\varphi: \mathbb{R}_{+}^n \rightarrow \mathbb{R}$ governs the CFMM where a trade impacts the reserves $\varphi(R') \geq \varphi(R)$ so that the value of the function after a trade is greater than or equal to the value of the function before~\cite{angeris2021replicating}. For example, the trading function of \emph{Uniswap V2}  is represented by the the function $\varphi(R_x, R_y) = R_{x}R_y$~\cite{Zinsmeister_v2-core_2019}.

\paragraph{Oracles.}
Until RMM-01, the methodology for providing financial derivative products involved the oracle dependency. Oracle dependencies introduce a centralization attack vector that significantly increases risk~\cite{app11167572}. These price oracle attacks have been orchestrated in production environments~\cite{hackernoon} and can be detrimental to protocol users. Recent work~\cite{angeris2020improved} leverages arbitrage for price alignment~\cite{cox1979option} unveiling how to mitigate oracle dependencies. Oracle dependencies dissuade protocol users from allocating more liquidity than the oracle source. 
\paragraph{RMMs.}
Replicating Market Makers (RMMs) allow protocol designers to directly provide a 1-homogeneous, concave, non-negative, non-decreasing payoff function to the liquidity providers via a replicating portfolio~\cite{angeris2021replicating}. Financial derivatives have been prevalent throughout history in traditional finance, allowing individuals to take leveraged or hedged directional views of different markets. As of March 2022, the derivative market on centralized exchanges represents 62.8\% of trading volume~\cite{cryptocompare} expressing an apparent demand. We can construct a collection of financial products familiar to traditional financial market participants by using the liquidity tokens from RMM-01 and the underlying risky and stable assets.

\paragraph{RMM-01.}
RMM-01~\cite{primitive} is a time-sensitive trading function
\begin{equation}\label{tradingfunc}
    \varphi(R_x,R_y) = R_y-K\Phi(\Phi^{-1}(1-R_x)-\sigma\sqrt{\tau})
\end{equation}
where $K$, $\sigma$, and $\tau$ are the options parameters, and $R_x$ and $R_y$ are the risky and stable asset reserves. Without a fee regime on trades, the LPTs fail to replicate a Black-Scholes covered call due to the external funding required to capture theta decay. The unique optimal fee regime in~\cite{Experience_rmms-py_2022} reduces the difference between the payoff of the LPT and that of a Black-Scholes covered call option to a near negligible margin. $K$ is the strike price in traditional option literature~\cite{black2019pricing}, while $k = \varphi(R_x,R_y)$ is the invariant as specified commonly in CFMM literature~\cite{uniswap}. Let $V_{cc}$ denote the value of a Black-Scholes covered call. In RMM-01, the invariant $k$ can take either a negative or positive value representing $V_{LPT}-V_{cc}$. At $k=0$, the RMM-01 pool is replicating a covered call exactly.
\paragraph{Liquidity Provider Tokens.}
RMM-01 LPTs are tokenized under the ERC-1155 standard~\cite{tokenStandards}. The composable nature of ERC-1155 tokens and the oracle-free Black-Scholes covered call from RMM-01 allow for the construction of every other Black-Scholes options instrument barring liquidity constraints. RMM-01 native derivatives are subject to a terminal error defined in more detail in~\cite{Experience_rmms-py_2022}. Some of these RMM-01 based mechanisms require an incentive for LPT holders to create available liquidity and consequently may be slightly overpriced with respect to Black-Scholes pricing~\cite{black2019pricing}. Overpricing is proportional to the incentives for liquidity providers. Within the same pool, additional layers of composability create liquidity fragmentation, rendering the implementation of short options unrealistic.
\paragraph{Outline.}
In~\S\ref{sec:VannillaOptions} we show the construction of vanilla options. In~\S\ref{sec:Binary} we move into the exotic Binary European Options. In~\S\ref{sec:thetavaults} we introduce a construction of the transparent structured product~\cite{Kallio2022-al} theta vaults introduced by the \emph{Opyn} protocol~\cite{ConvexityProtocol}. Next, in~\S\ref{sec:liq} we introduce a mechanism for a liquidation-free lending market that leverages RMM-01 constructions to ensure a hedged health factor. The last section~\S\ref{sec:combo} combines some of the potentially existing constructions to replicate the payoff of a straddle and a futures contract payoff. This approach to derivative products is more secure than oracle-based systems under the condition of sufficient liquidity in the underlying RMM-01. Each mechanism that involves a lending flow assumes no swap friction. To account for swap friction, over-collateralize proportionally to the slippage of the underlying market. 

\section{Vanilla Options}\label{sec:VannillaOptions}
\paragraph{}
We can achieve long options through a borrowing and shorting flow on the LPTs. Applying a theoretical examination to long options, suppose there is an RMM-01 pool with the strike, expiry, and implied volatility $K, T,\sigma$, respectively. The payoff of the LPTs denominated in the cash asset follows
\begin{equation}\label{LPT Payoff}
    V_{LPT}(S(t)) = S(t)(\Phi(-d_1))+K\Phi(d_2)+k,
\end{equation}
where $S(t)$ is the reported price of the underlying pool at time $t<T$,
\begin{equation}
    d'_1=\frac{\ln{(\frac{S(t)}{K})}+(\sigma^2/2)\tau}{\sigma\sqrt{\tau}} \\,
\end{equation}
and $d'_2=d'_1-\sigma\sqrt{\tau}$ with $\tau=T-t$ and $\Phi$ is the standard normal cumulative distribution function. The payoff of a long call follows $V_{call} = \Phi(d_{1}(t)) - \frac{K}{S(t)}\Phi(d_{2}(t))$ and the payoff of a long put follows $V_{put}=K\Phi(-d_2(t))-S(t)\Phi(-d_1(t))$~\cite{black2019pricing}.

To obtain either of the long options, we need to short the LPT for either the cash asset or risky asset, the first leading to a long put and the latter leading to a long call. To achieve this without oracle dependencies~\cite{app11167572} or a liquidation framework~\cite{qin2021empirical}, we need to maintain enough collateral to cover the maximum possible debt of $K+k$ units of stable assets. Since the invariant $k$ is unknown prior to expiry, the maximum debt is variable~\cite{primitive}. If $k$ is strongly positive, the value of the LPTs is significantly greater than that of a Black-Scholes covered call~\cite{black2019pricing}, which could lead to a lack of available collateral liquidity to fund the repayment. 

We suggest a repayment amount based on the covered call value, simply neglecting the invariant $k$. By forcing this constraint, the borrower's debt is capped at $K$ worth of LPTs while still allowing the lender to receive the payoff of the underlying derivative. This constraint guarantees the lender that they receive a covered call payoff by introducing a buy-side to the system. The assurance of this repayment depends on implementation details. To incentivize this lending further, we suggest implementing an interest rate for borrowing. Additionally, an \emph{Ante Finance} Ante Test~\cite{AnteFinance} on the value of the invariant $k$ ensuring $k\leq0$ may be beneficial, allowing the lender to hedge the loss of additional swap fees by challenging the test on the associated trust market. 
\subsection*{Long Call}
\begin{prop*}
    Borrowing $1$ LPT and selling the underlying assets for the risky asset results in $1$ Black-Scholes call option.
\end{prop*}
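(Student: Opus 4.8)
The plan is to reduce the proposition to a single algebraic identity between the LPT payoff and the Black--Scholes call, and then to read off the borrow-and-sell flow as the financial realisation of that identity. First I would take the LPT payoff from \eqref{LPT Payoff} and invoke the repayment convention discussed above, which caps the borrower's debt at $K$ and therefore lets me discard the invariant term $k$; what remains is the covered-call value $V_{cc}(S) = S\,\Phi(-d_1) + K\,\Phi(d_2)$ carried by one LPT.

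Next I would establish the core decomposition. Using the elementary identity $\Phi(-d_1) = 1 - \Phi(d_1)$,
\begin{equation}
V_{cc}(S) = S\bigl(1 - \Phi(d_1)\bigr) + K\,\Phi(d_2) = S - \bigl(S\,\Phi(d_1) - K\,\Phi(d_2)\bigr),
\end{equation}
so that $V_{cc}(S) = S - C(S)$, where $C(S) = S\,\Phi(d_1) - K\,\Phi(d_2)$ is precisely the undiscounted Black--Scholes call. Dividing by $S$ recovers the risky-denominated call payoff quoted in the excerpt, since $C(S)/S = \Phi(d_1) - \tfrac{K}{S}\,\Phi(d_2) = V_{call}$. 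This pins the target down as $V_{call} = 1 - V_{cc}(S)/S$ in units of the risky asset, i.e.\ a long call is one unit of the underlying net of one covered call.

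With the identity in hand I would interpret the flow as a balance sheet. Borrowing one LPT is shorting the covered call, a liability of value $V_{cc}(S)$; liquidating the redeemed reserves into the risky asset supplies the offsetting long-underlying leg, so the net holding has value $S - V_{cc}(S) = C(S)$, equivalently $1 - V_{cc}(S)/S = V_{call}$ in risky units. To argue that this is a genuine call rather than a coincidence of values at the single time $t$, I would check the terminal behaviour: as $\tau \to 0$ the covered call settles at $\min(S_T, K)$, so the residual $S_T - \min(S_T, K) = \max(S_T - K, 0)$ is exactly the call payoff, which is the same statement as ``covered call plus long call equals one unit of the underlying.''

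I expect the algebra to be routine and the real obstacle to be the financial bookkeeping around $k$ and the denomination. The delicate point is justifying that neglecting $k$ while capping the debt at $K$ yields the holder exactly the call payoff rather than an approximation; this is where the terminal error and the sign constraint $k \le 0$ enforced by the Ante Test are needed to control the discrepancy between $V_{LPT}$ and $V_{cc}$. I would discharge it by appealing to the no-swap-friction assumption stated in the outline, under which the short-covered-call-plus-long-underlying position is self-financing, so that the value identity $C = S - V_{cc}$ propagates from time $t$ through to the terminal payoff.
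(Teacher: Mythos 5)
Your proposal is correct and takes essentially the same route as the paper's proof: the core step in both is the symmetry identity $1-\Phi(-d_1)=\Phi(d_1)$, which turns $1 - V_{cc}(S)/S$ into $\Phi(d_1)-\frac{K}{S}\Phi(d_2)=V_{call}$, realised financially by borrowing the LPT (a short covered call, with $k$ neglected per the repayment convention) against collateral that tops the position up to exactly one unit of the risky asset. Your terminal-payoff check ($\max(S_T-K,0)$ as $\tau\to 0$) and the discussion of self-financing are sensible additions the paper omits, but the decomposition and bookkeeping are the same.
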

\begin{proof}
   Let $t_0$ be the time of borrow and $t_0<t_f\leq T$ be the time of repayment. In this case, we are borrowing the LPT and denominating it in the risky asset, which involves breaking the LPT into its underlying assets and selling the stable for the risky. The repayment amount is given by $R$ and the required collateral to cover the maximum debt is given by $C$:
    \begin{align*}
        R&=
            \frac{S(t)\Phi(-d_1(t_f))+K\Phi(d_2(t_f))}{S(t)}\\
        C&=1-\frac{S(t)\Phi(-d_1(t_0))+K\Phi(d_2(t_0))}{S(t)}. \\
    \end{align*}
    The net payoff of the borrower, denoted by $V$, is then
    \begin{align*}
        V&=
            1-\frac{S(t)\Phi(-d_1(t_f))+K\Phi(d_2(t_f))}{S(t)} \\
        V&=
            \Phi(d_1(t_f))-\frac{K}{S(t)}\Phi(d_2(t_f)) && \text{(by symmetry of $\Phi$)}
    \end{align*}
    We can see the borrower is receiving $V_{call}$.
\end{proof}
\label{sec:Lcall}
\subsection*{Long Put}
We now turn to the construction of long puts.
\begin{prop*}
    Borrowing $1$ LPT and selling the underlying assets for the cash asset results in $1$ Black-Scholes put option.
\end{prop*}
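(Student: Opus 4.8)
The plan is to mirror the Long Call argument, swapping the denomination from the risky asset to the cash asset. The key structural difference is that the LPT payoff in Equation~\eqref{LPT Payoff} is already denominated in the cash asset, so no division by $S(t)$ is required; the debt cap therefore sits at $K$ cash units rather than at the single risky unit that appeared in the call case. First I would fix $t_0$ as the time of borrow and $t_0 < t_f \le T$ as the time of repayment, borrow $1$ LPT, break it into its underlying reserves, and sell the risky asset for the cash asset. Following the repayment constraint introduced earlier in the section, I neglect the invariant $k$ so that the borrower's maximum debt is capped at $K$ worth of LPTs.

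Next I would record the two quantities that pin down the borrower's position. The repayment, in cash units, is just the value of the LPT at $t_f$,
\[
    R = S(t_f)\Phi(-d_1(t_f)) + K\Phi(d_2(t_f)),
\]
and the collateral required to cover the capped maximum debt is
\[
    C = K - \bigl(S(t_0)\Phi(-d_1(t_0)) + K\Phi(d_2(t_0))\bigr).
\]
The net payoff of the borrower is then $V = K - R$, since the short LPT position is held against a $K$-denominated cap rather than a unit cap.

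Finally I would apply the symmetry of the standard normal CDF, $1 - \Phi(x) = \Phi(-x)$, to the strike term. Writing $K - K\Phi(d_2(t_f)) = K\Phi(-d_2(t_f))$ gives
\[
    V = K\Phi(-d_2(t_f)) - S(t_f)\Phi(-d_1(t_f)),
\]
which is precisely $V_{put}$ as stated in the excerpt, completing the argument.

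The step I expect to demand the most care is the bookkeeping of the denomination together with the debt cap: I must be sure the maximum debt is $K$ (not the $1$ used in the risky-denominated call case) and that dropping $k$ is legitimate under the earlier repayment constraint. Once the cap is correctly identified as $K$, the remainder reduces to a single application of normal-CDF symmetry, identical in spirit to the Long Call computation, so the genuine content of the proof lies in justifying the setup rather than in the algebra.
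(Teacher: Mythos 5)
Your proposal is correct and follows essentially the same route as the paper's own proof: identical setup ($t_0$, $t_f$, breaking the LPT and selling risky for cash), the same repayment $R$ and collateral $C$ expressions with the debt capped at $K$, and the same final application of $1-\Phi(x)=\Phi(-x)$ to recover $V_{put}$. If anything, your explicit use of $S(t_f)$ and $S(t_0)$ is slightly more careful than the paper's ambiguous $S(t)$ notation.
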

\begin{proof}
    Let $t_0$ be the time of borrow and $t_0<t_f\leq T$ be the time of repayment. In this case, we are borrowing the LPT and denominating it in the stable asset, which involves breaking the LPT into its underlying assets and selling the risky for the stable. The repayment amount is given by $R$ and the required collateral to cover the maximum debt is given by $C$:
    \begin{align*}
        R&=
            S(t)\Phi(-d_1(t_f))+K\Phi(d_2(t_f)) \\
        C&=K-S(t)\Phi(-d_1(t_0))-K\Phi(d_2(t_0)) \\
    \end{align*}
    The net payoff of the borrower is denoted by $V$ is thus
    \begin{align*}
        V&=
            K-S(t)\Phi(-d_1(t_f))-K\Phi(d_2(t_f)) \\
        V&=
            K\Phi(-d_2(t_f))-S(t)\Phi(-d_1(t_f)) && \text{(by symmetry of $\Phi$)}
    \end{align*} 
    We can see the borrower is receiving $V_{put}$.
\end{proof}
\section{Binary Options}\label{sec:Binary}
The following most immediate derivative product built on RMM-01 are binary options~\cite{raw2011binary}. The viability of this construction within the EVM depends on the implementation of the redemption mechanism. Using the LPTs, one can sell the rights to either asset in the liquidity position for an upfront premium, resulting in binary options. Specifically, the cash asset in the LPT replicates a cash-or-nothing call option payoff closely, whereas the risky asset replicates an asset-or-nothing put option closely. This mechanism acts seamlessly concerning the performance of the LPTs, as it does not require removing any LPTs from the underlying RMM-01 pool. 

The payoff of a Black-Scholes cash-or-nothing call denominated in the cash asset follows value function
\begin{equation*}\label{CONC}
    V_{conc}=\Phi(d_2).
\end{equation*}
Similarly the payoff of a Black-Scholes asset-or-nothing put denominated in the cash asset follows the value function
\begin{equation*}\label{AONP}
    V_{aonp}=S(t)\Phi(-d_1).
\end{equation*} 

Given an RMM-01 pool with the strike, expiry, and implied volatility $K, T,\sigma$, we know the payoff of this pool follows the value,
\begin{equation*}\label{LPTVal}
V_{LPT}=S(t)(1-\Phi(d_1))+K\Phi(d_2)+k    
\end{equation*}

Suppose a liquidity provider is willing to sell the rights to one of the assets in $1$ LPT. We show that selling the rights to one of the assets simultaneously creates an asset-or-nothing put and a cash-or-nothing call. The buyer receives the position corresponding to the sold asset.
\subsection*{Construction from RMM-01}
\begin{prop*}
    Selling the rights to either of the underlying assets in an LPT leads to the creation of both an asset-or-nothing put and an cash-or-nothing call, to within a range of error defined by the invariant $k$.
\end{prop*}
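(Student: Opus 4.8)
The plan is to prove the claim by directly decomposing the LPT value function into the value contributions of its two underlying reserves and matching each contribution against the two binary-option value functions stated above. Starting from $V_{LPT}=S(t)(1-\Phi(d_1))+K\Phi(d_2)+k$, I would first apply the symmetry identity $1-\Phi(d_1)=\Phi(-d_1)$ so that the risky-asset contribution reads $S(t)\Phi(-d_1)$, which is exactly $V_{aonp}$. Thus the portion of the LPT value attributable to the risky reserve coincides identically with the asset-or-nothing put, with no error term.

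Second, I would examine the stable-asset contribution. From the trading function in~\eqref{tradingfunc}, the stable reserve backing one LPT satisfies $R_y=k+K\Phi\bigl(\Phi^{-1}(1-R_x)-\sigma\sqrt{\tau}\bigr)$, and substituting $R_x=\Phi(-d_1)$ (so that $\Phi^{-1}(1-R_x)=d_1$ and the inner argument collapses to $d_2$) gives $R_y=K\Phi(d_2)+k=K\,V_{conc}+k$. Hence the cash leg reproduces $K$ units of the cash-or-nothing call $V_{conc}=\Phi(d_2)$ offset by precisely the invariant $k$; this is the source of the ``range of error defined by $k$'' in the statement.

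Third, I would argue the ``both simultaneously'' conclusion via the partition interpretation. Since $V_{LPT}=V_{aonp}+K\,V_{conc}+k$, selling the rights to the risky reserve transfers the asset-or-nothing put to the buyer while the seller retains the cash-or-nothing call (together with the residual $k$), and selling the rights to the stable reserve does the reverse. Either choice therefore instantiates both binary instruments at once, one held by each counterparty, which is the content of the proposition.

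The main obstacle I expect is the careful bookkeeping of the invariant $k$: one must justify that the asset-or-nothing-put leg is exact whereas the entire discrepancy is carried by the cash-or-nothing-call leg, and relate this back to the interpretation of $k$ as $V_{LPT}-V_{cc}$ established earlier, noting that at $k=0$ the decomposition is exact. A secondary subtlety is normalization: $V_{conc}$ as written pays a single unit while the LPT's stable reserve supplies $K$ units, so the identification of the cash leg must be stated up to this scaling factor $K$ to be precise.
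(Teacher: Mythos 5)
Your decomposition and final identities match the paper's, and your handling of the stable leg (trading function plus the collapse of $\Phi^{-1}(1-R_x)-\sigma\sqrt{\tau}$ to $d_2$) is exactly the paper's computation. But there is a circular step in the risky leg. You obtain $R_x=\Phi(-d_1)$ by declaring the term $S(t)(1-\Phi(d_1))$ in $V_{LPT}=S(t)(1-\Phi(d_1))+K\Phi(d_2)+k$ to be ``the portion of the LPT value attributable to the risky reserve.'' That attribution is precisely what the proposition asserts for the risky leg: the value formula is a single scalar identity constraining only the sum $S(t)R_x+R_y$, and nothing in it, by itself, tells you how that total splits between the two reserves. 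Asserting the split term-by-term assumes the conclusion ($R_x=\Phi(-d_1)$), and your subsequent substitution of that value into the trading function inherits the assumption; the fact that the resulting $R_y=K\Phi(d_2)+k$ is consistent with the value formula verifies consistency of your identification, not its truth.

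The paper closes exactly this hole by deriving $R_x$ as a function of the spot price independently of any value formula: it inverts the reported-price relation $S(R_x)=Ke^{\Phi^{-1}(1-R_x)\sigma\sqrt{\tau}}e^{-\frac{1}{2}\sigma^{2}\tau}$ of RMM-01, taking logarithms and rearranging to get $\Phi^{-1}(1-R_x)=\frac{\ln(S/K)}{\sigma\sqrt{\tau}}+\frac{1}{2}\sigma\sqrt{\tau}=d_1$, hence $R_x=1-\Phi(d_1)=\Phi(-d_1)$ by symmetry of $\Phi$. With that established, the trading function then gives $R_y=K\Phi(d_2)+k$, and the $V_{LPT}$ decomposition you started from becomes a consequence of the two reserve identities rather than a premise. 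To repair your write-up you need only replace your first step with this inversion (or an equivalent independent derivation of $R_x$ in terms of $S$); your bookkeeping of the invariant $k$, the $K$-scaling of $V_{conc}$, and the partition argument for why selling either reserve instantiates both binaries simultaneously are all sound and agree with the paper's conclusion.
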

\begin{proof}
    Let $t_0$ be the time of sale of one of the reserves. Given equation~\eqref{tradingfunc}, it is shown in~\cite{primitive} that the cash reserve follows $R_y =  K \Phi(\Phi^{-1}(1 - R_x) - \sigma \sqrt{\tau}) + k$ and the spot price follows $S(R_x) = Ke^{\Phi^{-1}(1 - R_x)\sigma \sqrt{\tau}} e^{-\frac{1}{2}\sigma^{2}\tau}$. Isolating for $S^{-1}(R_x)$ to determine the payoffs of both reserves yields
    \begin{align*}
        S&=Ke^{\Phi^{-1}(1-R_x)\sigma\sqrt{\tau}}e^{-\frac{1}{2}\sigma^2\tau} \\
        \ln \left(\frac{S}{K}\right)&=\Phi^{-1}(1-R_x)\sigma\sqrt{\tau}-\frac{1}{2}\sigma^2\tau \\ 
        \frac{\ln(\frac{S}{K})}{\sigma\sqrt{\tau}}+\frac{1}{2}\sigma\sqrt{\tau}&=\Phi^{-1}(1-R_x) \\
        \Phi\left(\frac{\ln(\frac{S}{K})}{\sigma\sqrt{\tau}}+\frac{1}{2}\sigma\sqrt{\tau}\right)&=1-R_x \\
        R_x&=\Phi\left(-\frac{\ln(\frac{S}{K})}{\sigma\sqrt{\tau}}-\frac{1}{2}\sigma\sqrt{\tau}\right) &&  \text{(By symmetry of $\Phi$)}\\ 
        S(t)R_x&=S(t)\Phi(-d_1) .\\
    \end{align*}
    The net payoff is
    \begin{equation}
        V_{R_x}=V_{aonp}.
    \end{equation}
    Now for the cash reserve,
    \begin{align*}
        \Phi^{-1}(1-R_x)&=\frac{\ln(\frac{S}{K})}{\sigma\sqrt{\tau}}+\frac{1}{2}\sigma\sqrt{\tau} \\
        R_y&=K\Phi(\frac{\ln(\frac{S}{K})}{\sigma\sqrt{\tau}}-\frac{1}{2}\sigma\sqrt{\tau})+k \\
        R_y&=K\Phi(d_2)+k .\\
    \end{align*}
    The net payoff is
    \begin{equation}
        V_{R_y}=KV_{conc}+k
    \end{equation}
    We've shown that each reserve individually replicates the payoff of either a cash-or-nothing call or an asset-or-nothing put, to within a range of error determined by the underlying invariant $k$.
    
    As a result, selling the rights to one of these reserves $R_i$ both creates a purchase opportunity for one of these binaries, while leaving the seller with the other binary option plus the premium $P=R_i$ at time $t_0$.
\end{proof}
\subsection*{Shorting Binaries}
Creating a token market on the binaries opens a vulnerability related to the American exercise environment. However, a coincidence of wants approach to shorting avoids this issue and the necessity for tokenization.

By shorting a cash-or-nothing call option for the cash asset, we achieve a cash-or-nothing put option, or by shorting the asset-or-nothing put option for the risky asset, we obtain an asset-or-nothing call option~\cite{carr1999static}. The payoff of a Black-Scholes cash-or-nothing put follows the value function,
\begin{equation}\label{conp}
    V_{conp}=\Phi(-d_2)
\end{equation}
Similarly, the payoff of a Black-Scholes asset-or-nothing call denominated in the cash asset follows the value function,
\begin{equation}\label{aonc}
    V_{aonc}=S(t)\Phi(d_1)
\end{equation}
where $S(t)$ is the spot price of the pair at time $t\le T$.

To short the underlying cash-or-nothing call and asset-or-nothing put positions, there must be an LPT available for borrow. There must also be a coincidence of wants between two borrowers looking to short the two positions. The borrower looking to short the cash-or-nothing call must pay a premium of $K-R_y$, and the borrower of the asset-or-nothing put must pay a premium of $1-R_x$. This allows the LPT to break into the underlying assets and distribute the reserves to the associated borrowers.
\subsection*{Remaining Binaries}
We can now prove that payoffs similar to Black-Scholes cash-or-nothing puts and asset-or-nothing calls are achievable through a similar flow to achieving the long options. We will start with the cash-or-nothing put construction.
\begin{prop*}
    Shorting the stable reserve position $R_y$ creates a cash-or-nothing put position.
\end{prop*}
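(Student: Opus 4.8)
The plan is to mirror the borrow-and-short flow of the Long Call and Long Put proofs, now applied to the single stable reserve $R_y$ rather than to the whole LPT, and to identify its complementary binary through the same $\Phi$-symmetry argument. First I would fix notation: let $t_0$ be the time at which the reserve is shorted and $t_0 < t_f \le T$ the time of repayment, and recall from the binary construction proof that the stable reserve satisfies $R_y = K\Phi(d_2) + k$, so $R_y$ is exactly $K$ units of the cash-or-nothing call up to the additive invariant $k$. This lets me treat shorting $R_y$ as shorting a (scaled) cash-or-nothing call, which is precisely the setting in which the Shorting Binaries discussion claims a cash-or-nothing put emerges.

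Next I would set up the two quantities that drive the payoff, exactly as in the long-option proofs. Since the cash-or-nothing call value $\Phi(d_2)$ is bounded above by $1$, the maximum possible debt of the shorted stable reserve is $K$ (neglecting $k$, following the repayment discussion that caps debt at $K$). The repayment due at $t_f$ is the reserve's value at that time, $R = K\Phi(d_2(t_f)) + k$, while the short seller sells the borrowed reserve for $R_y(t_0)$ and posts the additional collateral $K - R_y(t_0)$ required to reach the maximum debt — this additional collateral is exactly the premium $K - R_y$ identified in the Shorting Binaries discussion. Throughout $[t_0,t_f]$ the position therefore holds $K$ in cash.

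I would then compute the net payoff as maximum debt minus repayment. The short seller returns $R$ at $t_f$, so
\[
V = K - R = K - K\Phi(d_2(t_f)) - k = K\Phi(-d_2(t_f)) - k,
\]
where the last step applies the complement identity $1 - \Phi(x) = \Phi(-x)$ used throughout the paper. Recognizing $\Phi(-d_2) = V_{conp}$ from equation~\eqref{conp}, this gives $V = K\,V_{conp} - k$, which establishes that shorting the stable reserve replicates a cash-or-nothing put up to the scaling by $K$ inherited from the forward binary construction and to within the invariant error $k$.

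The routine algebra — the complement identity for $\Phi$ and the subtraction $K - R$ — is not where the difficulty lies. The step deserving the most care is the treatment of the invariant $k$: the clean identity $V = K\Phi(-d_2(t_f))$ holds only after the modeling choice to cap the debt at $K$ and to neglect $k$ in the maximum-debt term, so I would be explicit that the replication is exact precisely at $k = 0$ and otherwise holds to within the stated $k$-error. I would also flag the mild boundedness argument justifying that the maximum debt is $K$ (i.e. $\Phi(d_2) \le 1$), since it is this cap, rather than the current reserve value, that pins down the collateral and hence the put-shaped payoff.
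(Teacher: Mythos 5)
Your proof is correct and takes essentially the same route as the paper: borrow the stable reserve, cap the total debt at $K$ so the position holds $K$ in cash (sale proceeds $R_y(t_0)$ plus collateral $K-R_y(t_0)$), repay at expiry, and apply the symmetry $1-\Phi(x)=\Phi(-x)$ to recognize $K\Phi(-d_2)=KV_{conp}$. The only divergence is bookkeeping of the invariant: the paper defines the repayment as the Black-Scholes value $KV_{conc}(T)$ with $k$ neglected (following the convention set in the vanilla-options section), which makes the replication exact, whereas you keep $k$ in the repayment, land on $KV_{conp}-k$, and then correctly observe that the same neglect-$k$ convention recovers the clean identity.
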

\begin{proof}
    We can define the repayment amount as $R_y$ at time of expiry $T$. We bound the maximum possible debt to $K$ similar to the long options case. Given the borrowed asset's value at time of borrow defined as $KV_{conc}(t_0)+k_0$ with repayment amount $R$, and collateral requirements $C$
    \begin{align*}
        R&=KV_{conc}(T)\\
        C&=K-KV_{conc}(t_0)-k_0 \\
    \end{align*}
    To achieve a cash-or-nothing put, we need to claim the rights of the borrowed stable reserve for the stable asset by removing the LPT. The net position of the borrower is than,
    \begin{align*}
        V&=K-KV_{conc}(T)\\
    V&=KV_{conp}(T) && \text{(by symmetry of $\Phi$)}
    \end{align*}
    The borrower then is long $KV_{conp}$.
\end{proof}
Now we move on to the asset-or-nothing call construction.
\begin{prop*}
    Shorting the risky reserve position $R_x$ creates an asset-or-nothing call position.
\end{prop*}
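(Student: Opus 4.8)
The plan is to mirror the cash-or-nothing put construction of the preceding proposition, but to denominate the short position in the risky asset rather than the cash asset. From the \emph{Construction from RMM-01} proposition we already know that the risky reserve $R_x$, taken on its own, replicates the asset-or-nothing put exactly: $S(t)R_x = S(t)\Phi(-d_1) = V_{aonp}$, and crucially no invariant term appears here, since $k$ is carried entirely by the cash reserve $R_y$. The claim therefore reduces to showing that shorting this asset-or-nothing put for the risky asset returns the complementary binary $V_{aonc} = S(t)\Phi(d_1)$, which is exactly the static relation $V_{aonp} + V_{aonc} = S(t)$ expressed through the identity $\Phi(-d_1) + \Phi(d_1) = 1$.

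First I would fix the borrow at time $t_0$ and the repayment at expiry $T$, following the long-options framework of Section~\ref{sec:VannillaOptions}. Next I would denominate the borrowed risky reserve in the risky asset: one unit of the reserve is worth $R_x = \Phi(-d_1)$ units of the risky asset, and since $\Phi \le 1$ the maximum possible debt is bounded by a single risky unit, playing the role that $K$ played in the cash-denominated case. I would then record the collateral requirement $C = 1 - \Phi(-d_1(t_0))$ and the repayment amount $R = \Phi(-d_1(T))$, so that the borrower's net position is
\begin{align*}
    V &= 1 - \Phi(-d_1(T)) \\
    V &= \Phi(d_1(T)) && \text{(by symmetry of $\Phi$)}.
\end{align*}
Finally, converting back to cash terms by multiplying through by the spot price gives $V = S(T)\Phi(d_1(T)) = V_{aonc}$, so the borrower is long the asset-or-nothing call.

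The step I expect to require the most care is the denomination bookkeeping rather than any analytic difficulty. In contrast to the cash-or-nothing put, the natural unit here is the risky asset, so both the debt bound and the intermediate payoff are cleanest in risky-asset units and are only converted to cash at the very end via the factor $S(t)$; keeping the two units of account straight is where an error would most likely creep in. A mild simplification relative to the earlier constructions is that $k$ never enters, because it sits entirely in the cash reserve, so unlike the cash-or-nothing put this replication is exact rather than accurate only ``to within a range of error defined by the invariant $k$.'' As a sanity check I would verify the boundary behavior: $S \to 0$ forces $\Phi(-d_1) \to 1$, so the debt saturates at one risky unit and $V \to 0$, while $S \to \infty$ gives $V \to S$, matching the payoff of an asset-or-nothing call.
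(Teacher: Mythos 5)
Your proposal matches the paper's proof essentially step for step: same bound of the maximum debt at $1$ risky unit, same collateral $C=1-\Phi(-d_1(t_0))$ and repayment $R=\Phi(-d_1(T))$, and the same net payoff $V=1-\Phi(-d_1(T))=\Phi(d_1(T))$ by symmetry of $\Phi$, identified with $V_{aonc}$ in risky-asset denomination. Your added observations (that $k$ never enters since it sits in the cash reserve, and the boundary sanity checks) are correct embellishments but do not change the argument.
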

\begin{proof}
    Let $t_0$ be the time of borrow. The repayment amount can be defined as $R_x$ at the time of expiry $T$. We bound the maximum possible debt at $1$ unit of the risky asset. Given the borrowed asset's value at time of borrow defined as $V_{aonp}(t_0)$ with repayment amount $R$, and the required collateral $C$ 
    \begin{align*}
        R&=\Phi(-d_1(T))\\
        C&=1-\Phi(-d_1(t_0))\\
    \end{align*}
    The net payoff the borrower receives 
    \begin{align*}
        V&=1-\Phi(-d_1(t_f))\\
        V&=\Phi(d_1(t_f)) && \text{(by symmetry of $\Phi$)}
    \end{align*}
    Given the payoff of an asset-or-nothing call denominated in the risky asset is $V_{aonc}=\Phi(d_1)$, we can see the borrower is long $V_{aonc}$.
\end{proof}

\section{Theta Vaults} \label{sec:thetavaults}
\paragraph{}
To start, we turn to a structure product known as theta vaults, as they are the most immediate product one can construct from RMM-01. A theta vault, pioneered by Ribbon Finance~\cite{ribbon}, is a roll-over options strategy of covered call or put selling to earn and compound yield sourced from theta decay consistently. This strategy traditionally involved minting the options through the \emph{Opyn} protocol~\cite{ConvexityProtocol} as \emph{oTokens} and auctioning them off to buyers. This mechanism encounters liquidity issues proportional to the size of the theta vault resulting from the need for a counter-party to take the opposite side of the bet to earn premiums. Similar liquidity problems would arise from the same mechanism in traditional equity markets. The more illiquid any market is, the more likely there will be a dependence on centralized market makers to take the counter-side of the bet. The dependency on market makers introduces inefficiencies in the market from the lack of diverse liquidity.

The use of RMM-01 mitigates this liquidity problem, allowing the growth of a single theta vault to grow as large as the spot market for the asset in question. The RMM-01 trading function explicitly captures theta decay through the swap fee~\cite{primitive} on the underlying spot market, shifting the requirement of a counter-party to option contracts to a need for spot arbitrage alone. This benefits from being a much stronger dependency as spot arbitrage will always be profitable, regardless of price action.

RMM-01 is a CFMM and requires two assets to mint an LPT, with each pool configuration requiring a differing amount of each asset. When the vault switches to an RMM-01 pool, it will need to re-balance the assets in hand before minting the new LPT. When a whole vault consisting of many liquidity providers is looking to re-balance, it can have a significant price impact~\cite{uniswap}. Therefore there is a natural upper bound on the liquidity managed by the vault. Above this liquidity bound, an alternate approach to re-balancing beyond a simple swap is needed.

We suggest intentionally mispricing the new RMM-01 pool, setting the initial ratio of reserves equivalent to the vaults' current liquidity ratio at re-balancing. Thus, there is an incentive for arbitrage to re-align the price~\cite{ross2013arbitrage,angeris2020improved} and, consequently, the allocation. The loss incurred during this re-balance approach scales homogeneously with liquidity. In contrast, the loss incurred during a swap scales at an increased capacity on every market. This approach, however, only appears viable for re-balance before expiry. This restriction is due to the LPT terminally denominating in one of the assets only.
\section{Liquidation-Free Lending Market} \label{sec:liq}

To provide the context of liquidation-free lending, first examine how a lending market defines a liquidation mechanism. Let $V_c\in(0,\infty)$ be the value of the collateral of the borrow position, and let $V_{debt}\in(0,\infty)$ be the value of the borrowed assets. Define the \emph{loan-to-value ratio} as $LTV=\frac{V_{debt}}{V_c}$ and the \emph{collateral ratio} as
\[C=\frac{V_c}{V_{debt}}.\]

Both these terms are called \emph{health factors} and are central to a lending market's structure~\cite{qin2021empirical}. A defined minimum collateral ratio or maximum loan-to-value ratio is required to keep the borrow position open for a particular lending market. If the collateral falls below this ratio, liquidation will occur~\cite{qin2021empirical, perez2021liquidations, gudgeon2020defi}. A funding source is needed to re-balance the borrow position, maintaining the health factor in ranges. Consequently, this eliminates liquidation risk. We can do this through the use of long options. The following approach applies to lending markets structured around the collateral ratio or the LTV-ratio.

\subsection*{Mechanism}

Let $P_x(t)$ be the market price of asset $x$ at time $t$ with respect to some numéraire. Suppose a borrower opens a position consisting of $n$ collateral assets and $m$ debt assets, with prices at the time of borrow $t=0$, $\{P_{R_{x_1}}(0),...,P_{R_{x_n}}(0),P_{R_{y_1}}(0),...,P_{R_{y_m}}(0)\}$ and reserves $\{R_{x_1},...,R_{x_n},R_{y_1},...,R_{y_m}\}$. The health factor initially reads
\[C_0=\frac{P_{R_{x_1}}(0)R_{x_1}+...+P_{R_{x_n}}(0)R_{x_n}}{P_{R_{y_1}}(0)R_{y_1}+...+P_{R_{y_m}}(0)R_{y_m}}
\]

We can use at-the-money calls and put options the hedge out unfavourable price action on the health factor. Suppose $P_{R_{x_i}}(t)<P_{R_{x_i}}(0)$ and $P_{R_{y_j}}(t)>P_{R_{y_j}}(0)$ for $i\in [1,...,n]$ and $j\in\ [1,...,m]$, and $t>0$. Both $C(t)$ and $LTV(t)$ are re-balanced if each the term $P_{R_{x_i}}(t)R_{x_i}$ and $P_{R_{y_j}}(t)R_{y_j}$ for $i\in [1,...,n]$ and $j\in\ [1,...,m]$ are re-balanced. We'll start with the collateral assets,
\[P_{R_{x_i}}(t)R_{x_i}'=P_{R_{x_i}}(0)R_{x_i}
\]
\[R_{x_i}+\Delta_{R_{x_i}}=\frac{P_{R_{x_i}}(0)}{P_{R_{x_i}}(t)}R_{x_i}
\]
\[\Delta_{R_{x_i}}=R_{x_i}(\frac{P_{R_{x_i}}(0)}{P_{R_{x_i}}(t)}-1)
\]
We need to source $\Delta_{R_{x_i}}$ units of the $i$-th collateral asset. We can use put options to source this funding. Note, a put option's value follows
\[V_{put}=\Phi(-d_2)K-\Phi(-d_1)S(t)
\]
where $K=S_0$. On RMM-01, this is denominated in the cash asset of the pool. In terms of the asset $R_{x_i}$, \[V_{put}(P_{R_{x_i}}(t))=\frac{P_{cash}(t)}{P_{R_{x_i}}(t)}(\Phi(-d_2)\frac{P_{R_{x_i}}(0)}{P_{cash}(0)}-\Phi(-d_1)\frac{P_{R_{x_i}}(t)}{P_{cash}(t)})
\]
We can calculate the amount of puts we need, $\alpha_i$, to source $\Delta_{R_{x_i}}$ units of collateral asset $i$,
\[\Delta_{R_{x_i}}=R_{x_i}(\frac{P_{R_{x_i}}(0)}{P_{R_{x_i}}(t)}-1)=\alpha_iV_{put}(P_{R_{x_i}}(t))
\] \\
\[\alpha_i=\frac{x_i(\frac{P_{R_{x_i}}(0)}{P_{R_{x_i}}(t)}-1)}{\Phi(-d_2)\frac{P_{R_{x_i}}(0)}{P_{cash}(0)}-\Phi(-d_1)\frac{P_{R_{x_i}}(t)}{P_{cash}(t)}}\frac{P_{R_{x_i}}(t)}{P_{cash}(t)}
\]
\begin{figure}[H]
    \centering
    \includegraphics[scale=0.5]{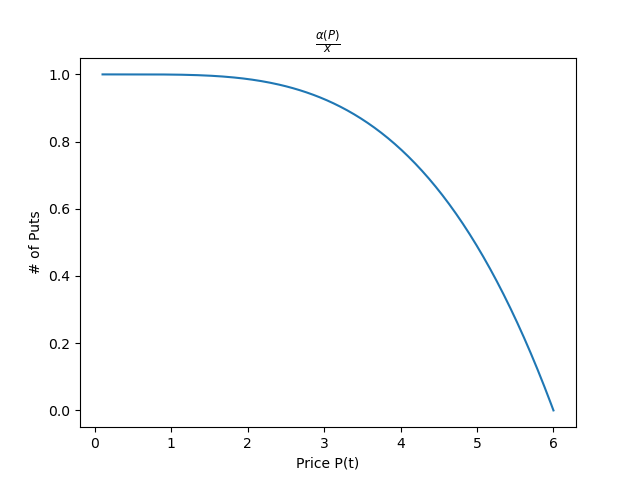}
    \caption{$\frac{\alpha(P)}{x}$ with implied volatility $\sigma=0.85$ and $8$ months to maturity.}
\end{figure}

As shown above in Figure 1, these functions $\alpha_i(P_{R_{x_i}}(t))$ are monotonically decreasing and satisfy the limit
\[\lim_{P_{R_{x_i}}(t)\to0^+}\alpha_i=R_{x_i}
\]
This implies simply holding $R_{x_i}$ amount of at-the-money put options on the $i$-th collateral asset is enough to hedge out any amount of downward price action on the terms $P_{R_{x_i}}(t)R_{x_i}$. Now for the debt assets,
\[P_{R_{y_j}}(t)R_{y_j}'=P_{R_{y_j}}(0)R_{y_j}
\]
\[R_{y_j}-\Delta_{R_{y_j}}=\frac{P_{R_{y_j}}(0)}{P_{R_{y_j}}(t)}R_{y_j}
\]
\[\Delta_{R_{y_j}}=R_{y_j}(1-\frac{P_{R_{y_j}}(0)}{P_{R_{y_j}}(t)})
\]
Again, we need to source $\Delta_{R_{y_j}}$ units of the $j$-th debt asset. Note, a call options value follows
\[V_{call}=\Phi(d_1)S(t)-\Phi(d_2)K
\]
where again $K=S_0$. On RMM-01, this value is denominated in the volatile asset (a.k.a. the debt asset),
\[V_{call}(P(R_{x_i},t))=\frac{P_{cash}(t)}{P_{R_{y_j}}(t)}(\Phi(d_1)\frac{P_{R_{y_j}}(t)}{P_{cash}(t)}-\Phi(d_2)\frac{P_{R_{y_j}}(0)}{P_{cash}(0)})
\]
We can than calculate $\beta_j$, the amount of call options required to source $\Delta_{R_{y_j}}$,
\[\Delta_{R_{y_j}}=R_{y_j}(1-\frac{P_{R_{y_j}}(0)}{P_{R_{y_j}}(t)})=\beta_{j}V_{put}
\]
\[\beta_j=\frac{y_j(1-\frac{P_{y_j}(0)}{P_{y_j}(t)})}{\Phi(d_1)\frac{P_{y_j}(t)}{P_{cash}(t)}-\Phi(d_2)\frac{P_{y_j}(0)}{P_{cash}(0)}}\frac{P_{y_j}(t)}{P_{cash}(t)}
\]

\begin{figure}[H]
    \centering
    \includegraphics[scale=0.5]{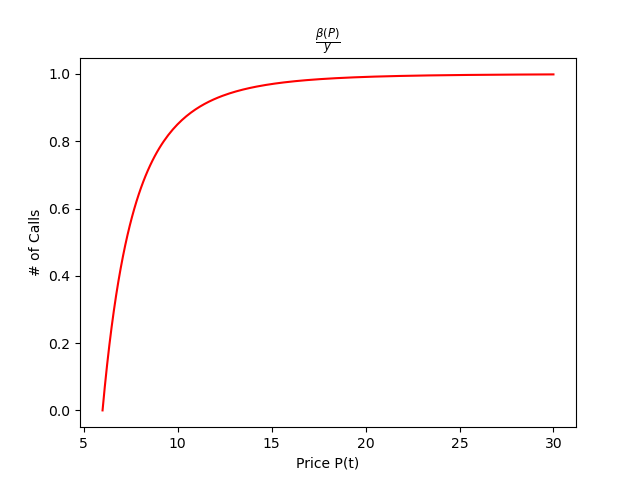}
    \caption{$\frac{\beta(P)}{y}$ with implied volatility $\sigma=0.85$ and $8$ months to maturity.}
\end{figure}

Similar to the case of the collateral assets, as shown above in Figure 2, these functions $\beta_j(P_{y_j}(t))$ are monotonically increasing and satisfy the limit
\[\lim_{P_{y_j}(t)\to\infty}\beta_j=y_j
\]
implying again holding simply $y_j$ at-the-money call options on the $j$-th debt asset is enough to hedge out any amount of positive price action on the terms $P_{R_{y_j}}R_{y_j}$.

Assume the health factor is currently within its limit. Given the reserves of collateral assets and debt assets, we need to maintain precisely at-the-money put options on the collateral assets respectively and at-the-money call options on the debt assets. The resulting hedge prevents the health factor from dipping below its value at initial entry.

Suppose adverse price action occurs, and we need to exercise these options. In the case of RMM-01, this involves repaying the LPT debt and retaining $\{R_{x_1}-\Delta_{R_{x_1}},...,R_{x_n}-\Delta_{R_{x_n}}\}$ of the at-the-money put options and $\{R_{y_1}-\Delta_{R_{y_1}},...,R_{y_m}-\Delta_{R_{y_m}}\}$ of the at-the-money call options. In that case, we will no longer have enough open options to hedge out any amount of adverse price action. One needs to re-insure the position with more opportunities to replace the exercised positions.
\subsection*{Analysis}

The mechanism above assumes that the options in use are at the money, all have the same implied volatility and time to maturity. Options built on an RMM-01~\cite{primitive} pool can never sustain being at the money throughout maturation. Assuming there is sufficient LPT liquidity for lenders, it is reasonable to have options with the same tenor.  As a result of the impracticality of the assumptions in a real-world setting, it is necessary to adjust the above formulation to account for more realistic conditions.

To determine the impact of moneyness on the mechanism's efficiency, we varied the strike price of options in $\alpha_i$ and $\beta_j$. We plotted the resulting surface functions reduced by their respective reserve quantities. As shown below in Figures 3 and 4, strike variance can undoubtedly have a widely variable effect on the number of options required. The further out of the money the options are, the more inefficient the hedging is.

\begin{figure}[H]
    \centering
    \includegraphics[scale=0.5]{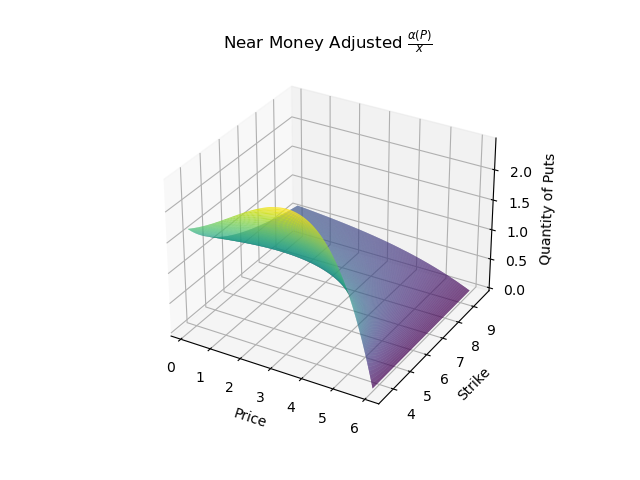}
    \caption{$\frac{\alpha(P)}{R_x}$ Adjusted to near-the-money strike prices, with implied volatility $\sigma=0.85$ and $8$ months to maturity.}
\end{figure}

\begin{figure}[H]
    \centering
    \includegraphics[scale=0.5]{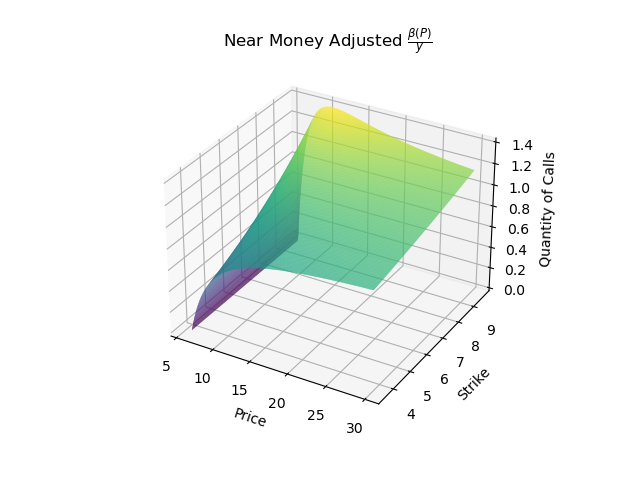}
    \caption{$\frac{\beta(P)}{R_y}$ Adjusted to near-the-money strike prices, with implied volatility $\sigma=0.85$ and $8$ months to maturity.}
\end{figure}

There are a few properties to note. Namely, $\alpha_i$ and $\beta_j$ are no longer monotonic when the options are out of the money. They each exhibit a global maximum at a distinct price point. Consequently, we can no longer look at the values they converge to the ends of their domains. Now we must look at their global maximums to obtain the required options.

If we expect strike prices to be near the money, $\tau$ will be short-dated with a high probability. Hence, the long options will be much less valuable. Being out-of-the-money on the number of options required to hold becomes far more pronounced. This mechanism may be un-viable if the only available long options are significantly out of the money.

The effect of being near the money concludes that the mechanism is still viable for implementation, as the overall efficiency is not impacted too unreasonably for the borrow positions so long as the options aren't deeply out of the money. 
\section{Combined Strategies}\label{sec:combo}

The composable nature of the products allows us to create even further derivatives and payoff structures. Theoretically, we can construct every other payoff function by using strictly cash-or-nothing calls and puts. This construction assumes liquidity at every possible strike price. We describe two basic examples below of products composed of the above instruments.
\subsection*{Straddles}

A straddle is a portfolio composed of an equal quantity of long calls and long puts with the same strike $K$, expiry $T$, and implied volatility $\sigma$~\cite{goldman1979path}. Both long options require shorting the LPT with the corresponding option quantities for the stable or risky asset. To borrow LPTs, we need to provide collateral of either $1-\frac{V_{LPT}}{S(t)}$ risky tokens per LPT or $K-V_{LPT}$ cash tokens per LPT. The asset this collateral is denominated in determines the type of long options achieved. Suppose a borrower opens a position of $m$ call options. Meaning they have contributed $m(1-\frac{V_{LPT}}{S(t)})$ risky tokens to receive a net position of $m(1-\frac{V_{LPT}}{S(t)})$ risky tokens. This borrower's position can be turned into a straddle by simply opening $m$ long put options and holding the $m$ calls.

Thus factoring in the cost of the long puts, the net cost to the borrower for opening a straddle of $m$ calls and $m$ puts is than given by $m(1-\frac{V_{LPT}}{S(t)})$ risky tokens plus $m(K-V_{LPT})$ cash tokens, receiving a net payoff of $m(1-\frac{V_{LPT}}{S(t)})$ risky tokens and $m(K-V_{LPT})$ stable tokens.

Given $x$ risky assets, one can determine the maximum amount of straddles obtainable. 
\begin{equation}
    x=m(1-\frac{V_{LPT}}{S(t)}+\frac{K-V_{LPT}}{S(t)})
\end{equation}
\begin{equation}
    m=\frac{x}{1-\frac{V_{LPT}}{S(t)}+\frac{K-V_{LPT}}{S(t)}}
\end{equation}

Symmetry in directional exposure depends on an equal quantity of calls and puts\cite{carr1999static}. The attraction of this symmetry is long volatility without leaving a directional bias in the position. If we expect volatility to rise but still have a directional bias, tune these quantities to fit the bias. For example, suppose we hold a bullish directional bias. It is advantageous to open more calls than puts to skew the payoff more directionally bullish.

On RMM-01, since both long options only depend on the LPTs, we can effectively take out any symmetry straddle on whatever $K$, $T$, $\sigma$ configuration pool with LPTs lent.

\subsection*{Futures}

A future is a derivative contract that forces two parties to transact an asset with each other at a pre-set date and price~\cite{tashjian1995optimal}. The nature of parity in options with expiration-dependent payoffs allows the recreation of futures positions using options. While futures are achievable on RMM-01, it is prohibitively more expensive than futures on centralized exchanges due to that lack of available leverage. It is still a worthwhile mechanism to define on RMM-01, as it provides further insight into the interaction of RMM-01 based derivatives and structured products.

A long future position payoff resembles holding spot at expiry. To achieve this, we need to open a covered call and a call option with the same $K$, $T$, $\sigma$ configuration. To open a covered call position, we need to provide liquidity to the RMM-01 of our choice. The cost of which is
\begin{equation}
    x=\Phi(-\frac{\ln(S/K)}{\sigma\sqrt{\tau}}-\frac{1}{2}\sigma\sqrt{\tau})
\end{equation}
risky tokens and $y(x)$ stable tokens, where $S(t)$ is the current spot price of the pool. To open a call options, the cost is $1-\frac{V_{LPT}}{S(t)}$, implying a debt of $1$ covered call. The net cost of opening a long future position is
\begin{equation}
1
\end{equation}
which is where the inefficiency arises as this implies full collateralization for any future position. Omitting the use of RMM-01, holding $1$ unit of the risky until the expiration time would lead to the same payoff function. Moreover, this is the idealized case with no trade slippage. Factoring in this additional friction, we conclude that there are significant savings by staying in the risky asset from the start. 

We can extend the above construction to include short future positions by shorting the long future position. We construct a short future position from a covered put and a long put with the same $K$, $T$, $\sigma$ configuration. On RMM-01, constructing a short future can be approached by shorting a call option for the stable asset to achieve a covered put and then opening a long put position.

This approach requires an additional lending market on the long options, which acts as another layer of friction and liquidity fragmentation. Thus, significantly increasing the cost beyond the extent of the long future case. Obtaining a short future position on RMM-01 is unreasonable due to additional friction.  

\section{Conclusion}

We have provided a theoretical demonstration of utilizing RMM-01 to construct a diverse array of financial primitives. Vanilla and binary options mechanisms implement financial primitives on-chain familiar to traditional financial marketplaces. Mechanisms for liquidation-free lending allow for a better user experience by mitigating liquidation risk. The theta vault framework provides a strategy to maintain vault performance at scale. 
\subsection*{Future Work}

This paper only briefly constructs each mechanism and does not analyze their behaviors in real-world market conditions. The capital efficiency of theta vaults and liquidation-free lending on RMM-01 needs careful evaluation. This work lays the foundation for future work in permissionless derivative design. Further inquiry into MEV-aware application design, mechanism architecture, and transparent structured products is necessary.
 
%I am sure we can put a long list of things right here haha

\bibliography{citations.bib}
\end{document}